
\documentclass[journal]{IEEEtran}
%


%

\usepackage{amsthm}
\usepackage{graphicx}
\usepackage{subcaption}

%
\usepackage{cite}
\usepackage{amsmath}
\usepackage{amssymb}
\interdisplaylinepenalty=2500
\usepackage{mathtools}

\DeclareMathOperator*{\argmax}{arg\,max}
%

%
\usepackage{algorithm}
\usepackage{algorithmic}
\makeatletter
\def\BState{\State\hskip-\ALG@thistlm}
\makeatother
\usepackage{fixltx2e}

\usepackage{stfloats}

\usepackage{xcolor}
\usepackage{lipsum}
\usepackage[margin=0.75in]{geometry}
\newgeometry{top=1.1in, left=0.85in,right=0.85in,bottom=1.07in}

\hyphenation{op-tical net-works semi-conduc-tor}

\begin{document}
\setlength{\belowcaptionskip}{-10pt}
\setlength{\abovedisplayskip}{2pt}
\setlength{\belowdisplayskip}{2pt}
\setlength{\parskip}{0pt}
%
\title{An Online Admission Control Mechanism for Electric Vehicles at Public Parking Infrastructures}

\author{\IEEEauthorblockN{Nathaniel Tucker \quad}
\and
\IEEEauthorblockN{Mahnoosh Alizadeh}
\vspace*{-0.65cm}
}


%


\maketitle

\begin{abstract}
We study an online reservation system that allows electric vehicles (EVs) to park and charge at parking facilities equipped with electric vehicle supply equipment (EVSEs). We consider the case where EVs arrive in an online fashion and the facility coordinator must immediately make an admission or rejection decision as well as assign a specific irrevocable parking spot to each admitted EV. By means of strategic user admittance and smart charging, the objective of the facility coordinator is to maximize total user utility minus the operational costs of the facilities. We discuss an online pricing mechanism based on primal-dual methods for combinatorial auctions that functions as both an admission controller and a distributor of the facilities' limited charging resources. We analyze the online pricing mechanism's performance compared to the optimal offline solution and provide numerical results that validate the mechanism's performance for various test cases. 
\end{abstract}


\section*{Notation}
\begin{align}
    &\mathcal{N} &&\text{Set of arriving EVs indexed by }n \nonumber \\[-2.0pt]
    &\mathcal{L} &&\text{Set of charging facilities indexed by }\ell \nonumber \\[-2.0pt]
    &\mathcal{M}_l &&\text{Set of EVSEs at facility $\ell$ indexed by $m$} \nonumber \\[-2.0pt]
    &\mathcal{T} &&\text{Set of time intervals indexed by }t=1,\dots,T \nonumber \\[-2.0pt]
    &\Theta && \text{Set of all possible user types} \nonumber \\[-2.0pt]
    &\mathcal{O}_n && \text{Set of schedule options that satisfy user }n \nonumber \\[-2.0pt]
    &M_l  &&\text{Number of EVSEs at facility }\ell \nonumber \\[-2.0pt]
    &C_l  &&\text{Number of cables per EVSE at facility }\ell \nonumber \\[-2.0pt]
    &E_l  &&\text{EVSE max energy output at facility }\ell \nonumber \\[-2.0pt]
    &s_l(t)  &&\text{Available solar at facility }\ell\text{ at time }t \nonumber \\[-2.0pt]
    &S_l  &&\text{Max solar generation at facility }\ell \nonumber \\[-2.0pt]
    &\pi_l(t)  &&\text{Grid energy price per unit at facility }\ell\text{ at time }t \nonumber \\[-2.0pt]
    &G_l(t)  &&\text{Max energy from grid at facility }\ell\text{ at time }t \nonumber \\[-2.0pt]
    &\theta_n  &&\text{Arrival }n\text{'s user type} \nonumber \\[-2.0pt]
    &t_n^-  &&\text{Arrival }n\text{'s reservation start time} \nonumber \\[-2.0pt]
    &t_n^+  &&\text{Arrival }n\text{'s reservation end time} \nonumber \\[-2.0pt]
    &h_n  &&\text{Arrival }n\text{'s energy request} \nonumber \\[-2.0pt]
    &\{\ell_n\}  &&\text{User }n\text{'s preferred facilities} \nonumber \\[-2.0pt]
    &\{v_{n\ell}\}  &&\text{User }n\text{'s valuations for each facility }\ell \nonumber\\[-2.0pt]
    &c_{no}^{ml}(t) && \text{Cable reservation for user $n$ in option $o$  at}\nonumber \\*[-2.0pt]
    & && \text{EVSE $m$ at facility $\ell$ at time $t$} \nonumber \\[-2.0pt]
    &e_{no}^{ml}(t) && \text{Charge reservation for user $n$ in option $o$ at }\nonumber \\*[-2.0pt]
    & && \text{EVSE $m$ at facility $\ell$ at time $t$} \nonumber \\[-2.0pt]
    &x_{no}^{ml} && \text{Binary assignment variable for user $n$ for}\nonumber \\*[-2.0pt]
    & && \text{option $o$ at EVSE $m$ at facility $\ell$} \nonumber \\[-2.0pt]
    &\hat{p}_{no}^{ml} && \text{Payment from user $n$ for option $o$ at EVSE $m$}\nonumber \\*[-2.0pt]
    & && \text{at facility $\ell$} \nonumber \\[-2.0pt]
    &y_{c}^{ml}(t) && \text{Cables allocated at EVSE $m$ at facility $\ell$ at time $t$}\nonumber \\*[-2.0pt]
    &y_{e}^{ml}(t) && \text{Energy allocated at EVSE $m$ at facility $\ell$ at time $t$}\nonumber \\*[-2.0pt]
    &y_{g}^{l}(t) && \text{Total energy needed at facility $\ell$ at time $t$}\nonumber \\*[-2.0pt]
    &f_g^l(\cdot)  &&\text{Facility }\ell\text{'s electricity procurement cost function} \nonumber \\[-2.0pt]
    &u_n &&\text{User $n$'s utility from the EVSE reservation system} \nonumber \\[-2.0pt]
    &p_c^{ml}(t) &&\text{Cable price at EVSE $m$ at facility $\ell$ at time $t$} \nonumber \\[-2.0pt]
    &p_e^{ml}(t) &&\text{Charging price at EVSE $m$ at facility $\ell$ at time $t$} \nonumber \\[-2.0pt]
    &p_g^{l}(t) &&\text{Energy procurement price at facility $\ell$ at time $t$} \nonumber \\[-2.0pt]
    &f^*(\cdot)  &&\text{Fenchel conjugate of a cost function/constraint} \nonumber \\[-2.0pt]
    &L_{c,e,g}  &&\text{Lower bound on valuations per resource} \nonumber \\[-2.0pt]
    &U_{c,e,g}  &&\text{Upper bound on valuations per resource} \nonumber \\[-2.0pt]
    &R  &&\text{Number of resources available across all facilities} \nonumber \\[-2.0pt]
    &\alpha &&\text{Online mechanism's competitive ratio} \nonumber \\[-2.0pt]
    &\underbar{s}_l(\cdot) &&\text{Lower bound on available solar energy}\nonumber \\[-2.0pt]
    &\overline{s}_l(\cdot) &&\text{Upper bound on available solar energy}\nonumber \\[-2.0pt]
    &I_C &&\text{EVSE investment cost per cable}\nonumber \\[-2.0pt]
    &I_M &&\text{EVSE installation cost}\nonumber \\[-2.0pt]
    &I_{m,n} &&\text{Infrastructure maintenance and networking cost}\nonumber
\end{align}

%
\IEEEpeerreviewmaketitle
\newtheorem{proposition}{Proposition}
\newtheorem{corollary}{Corollary}[proposition]
\makeatletter
\def\blfootnote{\xdef\@thefnmark{}\@footnotetext}
\makeatother

\blfootnote{
\indent
This work was supported by the California Energy Commission through SLAC. Solicitation: GFO-16-303. Agreement: EPC-16-057.\\
\hspace{8.5pt}N. Tucker and M. Alizadeh are with the Department of Electrical and Computer Engineering, University of California, Santa Barbara, CA 93106 USA (emails: nathaniel\_tucker@ucsb.edu; alizadeh@ucsb.edu).}

\section{Introduction}

As of October 2018, one million plug-in electric vehicles (PEVs) have been sold in the United States \cite{New_stat}. Furthermore, sales have exceeded 20,000 units per month since May 2018 and these numbers are expected to continue trending upward beyond 2020 \cite{New_stat}. As such, coordinated charging strategies and charging infrastructure planning are paramount for ensuring the growing charging demand is satisfied in an environmentally responsible manner.

There has been a growing number of related papers that study EV smart charging methods as well as infrastructure planning and investment analysis to encourage renewable energy usage in vehicle fleets, aggregate groupings, and parking facilities. For an overview, \cite{New_TSG_SmartEVGrid,New_Survey,New_20} provide in-depth reviews of smart charging technologies as well as societal and grid impacts. Investigations on the interactions between EV aggregations and the grid can be found in \cite{New_24,New_22}. 
Because smart charging has proven to benefit society, infrastructure investments must be made to support future charging implementations \cite{New_17,New_15,New_13,New_14,New_12,New_11}. Papers \cite{New_11,New_12,New_13} study where to locate charging stations as well as how to effectively size the facilities. In \cite{New_17}, the authors study a planning framework for charging stations from the perspective of a social planner. Likewise, the authors in \cite{New_15} study the design criteria for Fast Charging Stations (FCSs) based on mobility behaviors and paper \cite{New_14} studies a planning scheme to maximize FCS usage and minimize infrastructure costs.

A critical but less studied problem is that coordinated charging at infrastructures can be heavily stunted if usage of the EVSEs is left uncontrolled \cite{bryce}. Without EV routing within parking facilities, EVSEs at preferred locations (e.g., near an elevator) can become congested while other EVSEs are left empty. This limits the smart charging benefits as congested EVSEs are forced to charge one EV after another to satisfy charging demand. Similarly, without admission control the limited charging resources at facilities could be allocated to low priority users, (e.g., users with small charging demands, users with long sojourn times, or users who are willing to park elsewhere) therefore, occluding high priority users that arrive later in the day. As such, the focus of this paper is to jointly perform admission control and smart charging, complementing previous work on coordinated charging and infrastructure planning. 

Prior work in this area includes \cite{New_19} where the authors investigate both First-Come-First-Serve (FCFS) and State-of-Charge (SoC) threshold policies for discerning which EVs are granted permission to use the EV charging infrastructure. Paper \cite{Robu} studies an online mechanism for the allocation of electricity to a population of EVs that have non-increasing marginal value for energy. Their setting allows for cancellation of reservations, which in our case is not allowed. In \cite{Zheng}, an online algorithm for scheduling deferrable charging requests to balance the total value of vehicle owners and the total cost for providing charging service is studied, but they also allow for revocation of previously allocated resources. Paper \cite{auc2charge} investigates an online auction that allows EV users to submit bids on their charging demand to the charging station and then the mechanism makes corresponding electricity allocation and pricing decisions. In this approach, users are expected to update their bids while parked at the charging station instead of only submitting one initial bid. Additionally, \cite{Stein} studied a consensus approach for an online setting where selfish EVs compete for a limited amount of energy. An intelligent parking lot energy management system is studied in \cite{Honarmand2014OptimalSO} to manage the scheduling of EVs to maximize charging for all EVs. Moreover, paper \cite{differentiated} formulates and analyzes a market model for deadline-differentiated pricing of deferrable electric power services; however, it does not focus on high levels of congestion or adversarial user valuations. Paper \cite{menu} presents a menu-based pricing scheme for allocating charge time within a facility and can lead to an efficient alternative approach to the EVSE reservation problem. The mechanism we present in this manuscript focuses more on the congestion within facilities due to limited number of EVSEs and high demand with the objective of admitting highest priority users.
In most previous work, charging facilities are assumed to have traditional Single-Output-Single-Cable (SOSC) EVSEs. Recently, a more versatile charger has been gaining popularity: the Single-Output-Multiple-Cable (SOMC) EVSE which allows multiple EVs to be connected to the same charger, but only one EV receives charge at a time \cite{SOMC}. SOMC EVSEs can improve facility operations by allowing more flexibility in charge scheduling and decreasing idle plug-in time from traditional SOSC chargers. Furthermore, SOMC chargers eliminate the need for users to remove their vehicles once their charging session is complete. In any SOMC facility charging strategy, these idle EVs need to be accounted for; if not, the revenue of the facility will be reduced (our solution accounts for the times when EVs are charging \textit{and} when they are idle). Utilizing SOMC EVSES, the authors of \cite{SOMC} study infrastructure investments, the authors of \cite{bryce} study centralized online assignment methods such as Next-Fit and Worst-Fit for arriving EVs at a parking facility, and the authors of \cite{ntucker_allerton} study multiple online pricing heuristics for EV to EVSE allocation to increase smart charging capabilities.

In this manuscript, we present an online pricing mechanism that functions as both an admission controller for parking facility access and a resource manager that optimizes smart charging strategies for vehicles admitted to the facility. The work presented in this manuscript complements existing literature in the area and the main contributions are as follows: 
\begin{enumerate}
    \item The online mechanism readily accommodates multiple locations, multiple limited resources, operational costs, and renewable generation integration. 
    \item The online mechanism does not rely on fractional allocations or rounding methods to produce integer allocations in a computationally feasible manner and it never revokes previously made reservations.
    \item The online mechanism readily handles the inherent stochasticity of the EVSE reservation problem including unknown sojourn times, unknown energy requests, and unknown user valuation functions. 
    \item The online mechanism is robust to adversarially chosen arrival sequences and always yields social welfare within a factor of $\frac{1}{\alpha}$ of the offline optimal. 
\end{enumerate}
\noindent Preliminary results from this paper were previously submitted as a conference paper\cite{ntucker_ACC}. In this paper, we present new theoretical results on the pricing functions by accounting for time varying behind-the-meter solar generation and provide more extensive numerical results showing the performance of the mechanism.

The remainder of the paper is organized as follows. Section \ref{section: prob_form} presents the system model and describes the offline EVSE reservation problem. Section \ref{section: Online_Mech} presents the online auction mechanism used to provide an approximate solution to the EVSE reservation problem and discusses the online mechanism's properties and performance guarantees. Section \ref{section:numerical} presents simulation results that showcase the performance of the online mechanism for different test cases.

\section{System Model}
\label{section: prob_form}
\subsection{System Structure and User Characteristics}

In this section, we describe our model for the EVSE reservation problem and user characteristics. We consider an EVSE reservation system that controls $L$ dispersed parking facilities. Each parking facility $l\in L$ is equipped with $M_l$ SOMC EVSEs each with $C_l$ cables (i.e., each facility can park $M_lC_l$ EVs at any given time $t=1,\dots,T$ but only charge $M_l$). In addition to the cable constraints, each EVSE has a maximum power output constraint denoted by $E_l$ that limits the amount of energy the EVSE can deliver in one time slot. To supply the EVSEs with electricity, each parking facility can procure energy from two sources: a rooftop solar generation system or the local distribution grid. We denote the available solar energy at facility $l$ at time $t$ with the variable $s_l(t)\in [0,S_l]$ where $S_l$ is the maximum rating for facility $l$'s rooftop system. Additionally, we denote $\pi_l(t)$ as the per unit price of electricity from the grid. Due to physical limits of the local transformer, we constrain facility $l$ to procure no more than $G_l(t)$ units of energy from the grid at each time slot.

Each day, $N$ EV owners submit requests to park and charge at various facilities. Each EV owner (user) is characterized by a set of attributes. Suppose user $n$ wants to park and charge her EV. When she submits her reservation request at time $t_n$, she commits to arrive at one of her desired parking facilities $\{l_n\}$ at time $t_n^-$ and to depart at $t_n^+$. Furthermore, user $n$ receives value $\{v_{nl}\}$ if her EV receives $h_n$ units of energy from facility $l$, meaning users have preferences for different facilities. With the aforementioned nomenclature, each arrival can be characterized by user `type':
\begin{equation}
\theta_n = \{ t_n^-, t_n^+, h_n, \{l_n\}, \{v_{nl}\} \} \in \Theta,
\end{equation}
where $\Theta$ is the type space of all possible users. Fig. \ref{fig: Model} presents an example allocation sequence with 4 arrivals and 2 EVSEs. Specifically, Fig. \ref{fig: Model} showcases the fact that there are limited charging resources within the parking facility that need to be allocated to the arrivals. Each arriving EV needs an EVSE cable, an EVSE energy schedule, and the facility needs an energy procurement schedule from i) behind-the-meter solar, ii) the local distribution grid, or iii) a combination of solar and grid energy. The arriving vehicles enter the facility one-by-one and utilize the limited resources during their stay, affecting how future arrivals are allocated as seen in Fig. \ref{fig: Model}.

\subsection{Offline Problem Formulation}
\begin{figure}[]
    \centering
    \includegraphics[width=\columnwidth]{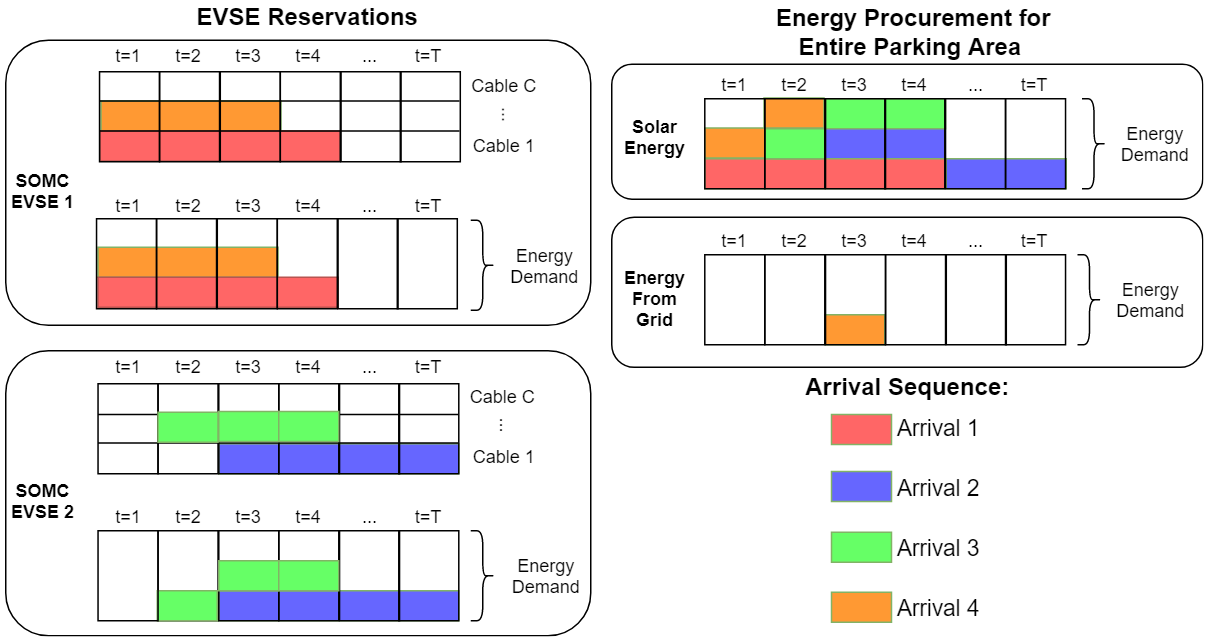}
    \caption{Example reservation schedule.} 
    \label{fig: Model}
\end{figure}
To request a reservation, user $n$ submits her user type $\theta_n$ to the EVSE reservation system. The EVSE reservation system creates a set of possible schedules that will fulfill user $n$'s requirements. Namely, each possible schedule, or \textit{option}, contains a cable reservation for the entire parking duration and the charge schedules that sum up to her desired charge amount. The reservation system generates these options for each facility within user $n$'s desired facility set and then the option that yields the highest utility to the user is selected. 

We denote the set of options (potential schedules) as $\mathcal{O}_n$. Each option $o\in\mathcal{O}_n$ corresponds to a facility $l_n$, a cable reservation $c_{no}^{ml}(t)$, and a charge schedule $e_{no}^{ml}(t)$. The cable reservation $c_{no}^{ml}(t)$ takes values 0 or 1 depending if user $n$ is assigned a cable from EVSE $m$ at facility $l$ at time $t$ in option $o$. Similarly, $e_{no}^{ml}(t)$ takes values from a discrete set corresponding to the energy delivered to user $n$'s EV. Through $e_{no}^{ml}(t)$, the EVSE reservation system is able to customize when each EV will receive charge and when it will be idle as well as the rate of charge. The set of feasible options for user $n$ can be written as: 
\begin{equation}
    \{ t_n^-, t_n^+, \{ c_{no}^{ml}(t) \}, \{ e_{no}^{ml}(t) \}, \{l_n\}, \{v_{nl}\} \}.
\end{equation}

When deciding whether or not to admit user $n$ and which option to allocate, the reservation system sets the binary variable $x_{no}^{ml}$ to 1 if option $o$ is chosen at EVSE $m$ at facility $l$. Additionally, the reservation system computes payments $\hat{p}_{no}^{ml}$ for each option that the user pays if accepted. If a user is not admitted into any parking facilities, she receives zero value and parks in an auxiliary lot without EVSEs.
    
The EVSE reservation system keeps track of the allocated resources throughout the day. The variables $y_c^{ml}(t)$ and $y_e^{ml}(t)$ correspond to the allocated cables and energy, respectively, at EVSE $m$, facility $l$, at time $t$. Each facility also has to procure the energy needed by all the EVSEs within; therefore, the total energy needed by facility $l$ at time $t$ is denoted as $y_g^{l}(t)$. Equations \eqref{eqn:offline cable demand}-\eqref{eqn:offline gen demand} detail how each resource demand is calculated:
\begin{align}
    \label{eqn:offline cable demand}
    & y_{c}^{ml}(t)=\sum_{\mathcal{N},\mathcal{O}_n} c_{no}^{ml}(t) x_{no}^{ml},\\
    \label{eqn:offline energy demand}
    & y_{e}^{ml}(t)=\sum_{\mathcal{N},\mathcal{O}_n} e_{no}^{ml}(t) x_{no}^{ml},\\
    \label{eqn:offline gen demand}
    & y_{g}^{l}(t)=\sum_{\mathcal{N},\mathcal{O}_n,\mathcal{M}_l} e_{no}^{ml}(t) x_{no}^{ml}.
\end{align}

The energy procurement, $y_g^{l}(t)$, determines the operational cost of facility $l$:
\begin{align}
\label{eq:generation cost}
    &f_{g}^{l}(y_{g}^{l}(t)) = \\
    &\nonumber\begin{cases}
        0 & y_g^{l}(t) \in [0,s_l(t)\big)\\
        \pi_l(t) (y_{g}^{l}(t)-s_l(t)) & y_g^{l}(t) \in [s_l(t),s_l(t)+G_l(t)] \\
        +\infty & y_{g}^{l}(t) > s_l(t)+G_l(t).
        \end{cases}
\end{align}

The operational cost of the facility is zero while solar energy is available. Once the demand, $y_g^{l}(t)$, exceeds the available solar, energy is purchased from the grid. Once the demand exceeds the sum of available behind-the-meter solar energy and the  transformer limit, no more energy can be procured.

With the system variables and equations defined, we can write the offline social welfare maximization problem (assuming all users' information is known beforehand):
\begin{subequations}
\begin{align}
    \label{eqn:offline obj}
    &\max_{x} \sum_{\mathcal{N},\mathcal{O}_n,\mathcal{L},\mathcal{M}_l} v_{nl} x_{no}^{ml}-\sum_{\mathcal{T}, \mathcal{L}}    f_{g}^{l}(y_{g}^{l}(t)) \\ \nonumber
    &\textrm{ subject to:}\nonumber\\
    \label{eqn:offline one option}
    &\sum_{\mathcal{O}_n, \mathcal{L},\mathcal{M}_{l}} x_{no}^{ml} \leq 1, \quad\forall\; n\\
    &\label{eqn:offline integer}\hspace{1pt} x_{no}^{ml}  \in \{ 0,1 \}, \hspace{11pt}\forall \; n,  o, l,m\\
    &\label{eqn:offline cable lim}\hspace{1pt}y_{c}^{ml}(t) \leq C_l, \hspace{13pt} \forall\; l,m,t\\
    &\label{eqn:offline energy lim}\hspace{1pt}y_{e}^{ml}(t) \leq E_l, \hspace{13pt} \forall\; l,m,t\\
    & \nonumber \textrm{ and }\eqref{eqn:offline cable demand}, \eqref{eqn:offline energy demand}, \eqref{eqn:offline gen demand}.
\end{align}
\end{subequations}
Moreover, the objective \eqref{eqn:offline obj} is to maximize the total \textit{social welfare} of the system. This includes the utility gained by arrivals using the system minus the operational costs of the facilities (we note that users who are not admitted receive utility equal to zero). Constraints \eqref{eqn:offline one option}-\eqref{eqn:offline energy lim} respectively ensure at most one option is selected per user, the assignment variable is an integer, the cable demand does not exceed capacity, and the energy demand does not exceed capacity. Equations \eqref{eqn:offline cable demand}-\eqref{eqn:offline gen demand} sum the resource demands.

Temporarily relaxing the integrality constraint \eqref{eqn:offline integer} on $x_{no}^{ml}$  allows us to find the Fenchel dual problem with dual variables $u_n$, $p_c^{ml}(t)$, $p_e^{ml}(t)$, and $p_g^{l}(t)$ \cite{Fenchel}. In the following, the Fenchel conjugate of a function is given as:
\begin{equation}
    f^*(p(t)) = \sup_{y(t)\geq0} \big\{ p(t)y(t) - f(y(t)) \big\}.
\end{equation} 
Accordingly, the Fenchel dual of \eqref{eqn:offline obj}-\eqref{eqn:offline energy lim} can be written:

\begin{subequations}
\begin{alignat}{3}
    \label{eqn:dual obj}
    &\min_{u,p} \sum_{\mathcal{N}} u_n + \sum_{\mathcal{T}, \mathcal{L}}    f_{g}^{l*}(p_{g}^{l}(t))\\* 
    &\nonumber+ \sum_{\mathcal{T}, \mathcal{L},\mathcal{M}_l} \Big( f_{c}^{ml*}(p_{c}^{ml}(t)) + f_{e}^{ml*}(p_{e}^{ml}(t)) \Big)
     \\
    &\nonumber \textrm{ subject to:}\\*
    \label{eqn:dual user utility}
    &\;u_n \geq \;v_{nl} 
    - \sum_{\mathcal{T}} \Big( c_{no}^{ml}(t)p_{c}^{ml}(t)\\*
    &\hspace{42pt}+e_{no}^{ml}(t)\big(p_{e}^{ml}(t)+p_{g}^{l}(t)\big)\Big)\hspace{10pt}\forall\;n,o,l,m\nonumber\\*
    \label{eqn:dual ui}
    &u_n \geq \;0, \hspace{90pt}\forall \;n\\*
    \label{eqn:dual pj}
    &p_{c}^{ml}(t),\;p_{e}^{ml}(t),\;p_{g}^{l}(t) \geq \;0,  \hspace{12pt}\forall\; l,m,t,
\end{alignat}
\end{subequations}

\noindent where $f^*(p(t))$ is the Fenchel conjugate for the limited resources' dual variables. The Fenchel conjugates for the capacity constraints can be written as:

\begin{align}
\label{eq:fenchel cable cost}
    &f^{ml*}_{c}(p^{ml}_{c}(t)) = p^{ml}_{c}(t)C_l, \hspace{24pt} p^{ml}_{c}(t) \geq 0\\
    \label{eq:fenchel energy cost}
    &f^{ml*}_{e}(p^{ml}_{e}(t)) = p^{ml}_{e}(t)E_l, \hspace{24pt} p^{ml}_{e}(t) \geq 0.
\end{align}
Additionally, the Fenchel conjugate for the energy procurement operational cost function can be written as:
\begin{align}
\label{eq:fenchel gen cost}
    &f^{l*}_{g}(p^{l}_{g}(t)) =  \\*
    &\nonumber\begin{cases}
        s_l(t)p^{l}_{g}(t), &p^{l}_{g}(t) < \pi_l(t) \\
        (s_l(t)+G_l(t))p^{l}_g(t)-G_l(t)\pi_l(t) & p^{l}_{g}(t) \geq \pi_l(t). \\
    \end{cases}
\end{align}
\subsection{Admittance, Rejection, and Allocation Decisions}
To determine how the EVSE reservation system decides whether or not to admit a user as well as which option to select if admitted, we make use of the Fenchel dual \eqref{eqn:dual obj}-\eqref{eqn:dual pj}. Specifically, we examine the KKT conditions for constraint \eqref{eqn:dual user utility}. If a user is denied in the offline problem, $u_n$ will be 0; otherwise, if a user is admitted, $u_n$ will be positive. As such, the EVSE reservation system solves the following equation to determine user $n$'s acceptance and her resource allocation:
\begin{align}
\label{eq: u_n}
    &u_n =\max\Big\{0,\max_{\mathcal{O}_n,\mathcal{L},\mathcal{M}_l} \big\{v_{nl} 
    \\*\nonumber&- \sum_{t\in[t_n^-,t_n^+]} \big( c_{no}^{ml}(t)p_{c}^{ml}(t)+e_{no}^{ml}(t)(p_{e}^{ml}(t)+p_{g}^{l}(t) )\big)\big\}\Big\}.
\end{align}
We note that $u_n$ corresponds to user $n$'s utility from the EVSE reservation system. If admitted, the cable reservation and charge schedule chosen for user $n$ correspond to the option $o$, EVSE $m$, and facility $l$ that maximize the second term in equation \eqref{eq: u_n}. Furthermore, the dual variables $p_c^{ml}(t)$, $p_e^{ml}(t)$, and $p_g^{l}(t)$ correspond to prices that users must pay for cables, energy, and energy procurement. As such, the total payments corresponding to user $n$'s different options are calculated as:
\begin{align}
\label{eq: payment}
    \hat{p}_{no}^{ml} = \sum_{\mathcal{T}} \Big( c_{no}^{ml}(t)p_{c}^{ml}(t)+e_{no}^{ml}(t)(p_{e}^{ml}(t)+p_{g}^{l}(t)) \Big).
\end{align}

The EVSE reservation system is allocating options that maximize each user's utility with respect to the current marginal prices. Additionally, users receive non-negative utility for participating in the EVSE reservation system; therefore, we satisfy individual rationality constraints.

We would like to note that our proposed mechanism can also be used without any actual payments if users do not have the option of choosing their type (i.e., their types are preassigned). In a company, if users are assigned valuations (e.g., CEO has a high value and regular employee has lower value, or someone with high charge level has lower value), then the prices do not have to be economic incentives. Rather they are used as dual variables that guide each user’s allocation without any monetary transfer (i.e., each employee does not actually have to pay to use the infrastructure, but different employees have different valuations and the ``shadow prices" allow for quick allocations).

The optimization problems presented in \eqref{eqn:offline obj}-\eqref{eqn:offline energy lim} and \eqref{eqn:dual obj}-\eqref{eqn:dual pj} assume complete knowledge of the arrivals beforehand. In practice, this is not the case; rather, users arrive and depart throughout the day. As such, the solution needs to be an online mechanism that can immediately allocate an arrival without knowledge of the future sequence of arrivals. Additionally, once a user has parked her car within a charging facility, she should not be asked to prematurely move her EV before her departure time. As such, the online mechanism should never revoke previous allocations. In the following, we discuss an online allocation mechanism that solves the EVSE reservation problem and meets the aforementioned design goals.

\section{Online Allocation Mechanism}
\label{section: Online_Mech}

\subsection{Online Marginal Prices}
It is evident that the EVSE reservation problem requires an online solution. In many online problems, approximate dynamic programming (ADP) heuristics have good performance given accurate statistics even with large state-spaces \cite{powell_book, Bertsekas08approximatedynamic,ambulance,ADP_STOC}. However, performance guarantees can be very hard to obtain for multi-stage decision making problems with complex action spaces over long time periods, and in our case, nonstationary arrival patterns and variable forecasts prohibit many traditional ADP techniques. 
As such, we present an online pricing mechanism that calculates the marginal prices on EVSE cables, energy, and generation based on a pricing heuristic, for which we provide performance guarantees. Specifically, our EVSE reservation system updates the prices $p(t)$ heuristically as the amounts of allocated resources $y(t)$ evolve, but only based on past observations. The pricing scheme has two major goals: (1) to make sure that the marginal gain in welfare from an allocation is greater than the operational cost incurred to serve the allocation, and (2) to filter out low value users early to ensure there are adequate resources for higher value users later on. The structure of the marginal price functions we use is similar to that of \cite{IaaS}, where the authors present a pricing framework for cloud-computing systems utilizing data centers with limited computation resources and server costs under an adversarial setting. For the limited number of cables at each EVSE, the proposed marginal payment function is given by:
\begin{align}
\label{eqn:zero inf price}
    p_{c}^{ml}(y_{c}^{ml}(t)) =& \Big(\frac{L_c}{2\sum_{\mathcal{L}}M_l(C_l+E_l+\frac{1}{M_l})}\Big) \\*
    &\nonumber \times\Big( \frac{2\sum_{\mathcal{L}}M_l(C_l+E_l+\frac{1}{M_l})U_c}{L_c} \Big)^{\frac{y_{c}^{ml}(t)}{C_l}},
\end{align}
where $y_{c}^{ml}(t)$ is the current demand for the cables at EVSE $m$ at location $l$ at time $t$. Additionally, $L_c$ and $U_c$ are the lower and upper bounds on users' valuation per cable per unit of time, respectively:
\begin{subequations}
\begin{align}
    & \label{eqn:Lc}L_c = \min_{\mathcal{N},\mathcal{O}_n,\mathcal{L},\mathcal{M}_l} \frac{v_{nl}}{\sum_{\mathcal{L}}M_l(C_l+E_l+\frac{1}{M_l})\sum_{t\in[t_n^-,t_n^+]}c_{no}^{ml}(t)},\\
    &\label{eqn:Uc}U_c = \max_{\mathcal{N},\mathcal{O}_n,\mathcal{L},\mathcal{M}_l,\mathcal{T}} \frac{v_{nl}}{c_{no}^{ml}(t)},\quad c_{no}^{ml}(t)\neq 0.
\end{align}
\end{subequations}
The pricing function for the EVSE energy units is the same as \eqref{eqn:zero inf price} with the exponent changed to $E_l$ instead of $C_l$. Likewise, calculate $L_e$ and $U_e$ using $e_{no}^{nl}(t)$ in \eqref{eqn:Lc} and \eqref{eqn:Uc}. Additionally, for the energy procurement resource, $L_g$ and $U_g$ are the same as $L_e$ and $U_e$, respectively.

To explain this pricing function, set $y_{c}^{ml}(t)=0$ and \eqref{eqn:zero inf price} outputs a price low enough that any user will be accepted (subject to $L_c$). Moreover, the pricing function \eqref{eqn:zero inf price} yields low initial values to allow reservations early on. As more arrivals are admitted into the reservation system, congestion begins to affect the shared resources. To combat congestion and filter our low value arrivals, as $y_{c}^{ml}(t)$ increases, the prices from \eqref{eqn:zero inf price} increase exponentially. When $y_{c}^{ml}(t)$ is equal to the capacity of the limited resource, the marginal price is set high enough to reject all future arrivals to ensure that no resource will ever be overallocated (we assume $L_c$ and $U_c$ are known).

Designing a pricing function for energy procurement at each facility is more complicated than the cable pricing. Here, the cost to procure energy is piecewise linear and depends on the current solar generation and the transformer capacity. As such, we propose the pricing function:
\begin{align}
\label{eqn:newpricing}
    &p_g^l(y_g^l(t)) = \\*
    &\nonumber\begin{cases}
        \Big( \frac{L_g}{2R} \Big) \Big( \frac{2R\pi_l(t)}{L_g} \Big)^{ \frac{y_g^l(t)}{s_l(t)}}, \hspace{87pt} y_g^l(t) < s_l(t), \\
        \Big( \frac{L_g-\pi_l(t)}{2R} \Big) \Big( \frac{2R(U_g-\pi_l(t))}{L_g-\pi_l(t)} \Big)^{ \frac{y_g^l(t)}{s_l(t)+G_l(t)}}+\pi_l(t), \\
        \hspace{175pt} y_g^l(t) \geq s_l(t), \\
        \end{cases}\\
    &\nonumber\textrm{where } R=\sum_{\mathcal{L}}M_l(C_l+E_l+\frac{1}{M_l}).
\end{align}
Equation \eqref{eqn:newpricing} is similar to the pricing function for the EVSE cables and energy; however, because procuring energy from the grid has non-zero cost, we need to ensure each user's payment is greater than the electricity cost needed to charge their vehicle. Additionally, when a facility's energy demand is less than the available solar, the marginal energy procurement price is reduced below the cost of electricity $\pi_l(t)$ to promote solar consumption. 
\begin{figure}[h]
    \centering
    \includegraphics[width=0.75\columnwidth]{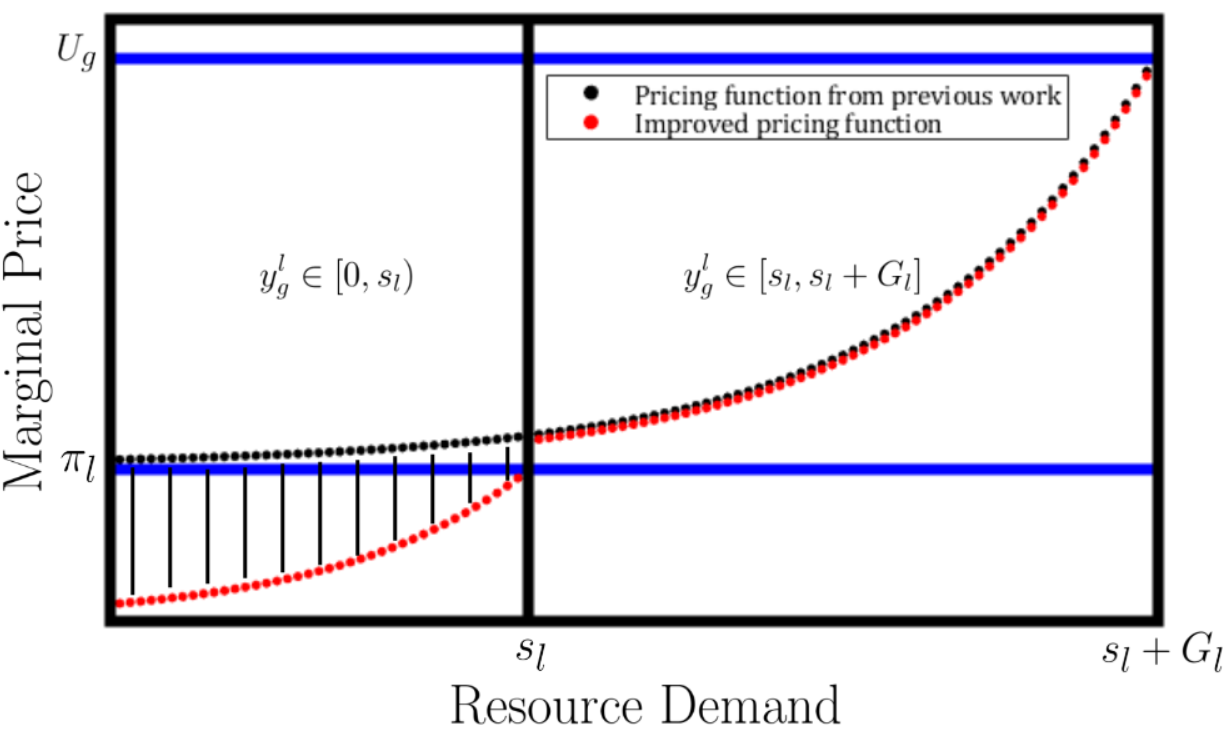}
    \caption{Pricing function for energy procurement. Shaded area: increase in users' utilities from updated pricing function.}
    \label{fig: pricing_example}
\end{figure}

\subsection{Proposed Algorithm and Performance Guarantees}
The admittance, allocation, and price update procedure for the EVSE reservation system is presented in Algorithm \textsc{OnlineParkNCharge}. 
When arrival $n$ submits her request, the system generates the feasible options $\mathcal{O}_n$ that fulfill her demands. Then, the system accepts or rejects user $n$ depending on her potential utility gain due to her valuation and the current resource prices (line \ref{alg:utility}). 
We note that line 9 requires solving an integer constrained maximization problem. This is not computationally burdensome as the optimization is solved for each individual vehicle at the time of arrival, with the potential utilities for each option can be calculated quickly via multiplication and addition. Then, any sorting method can be used to find the highest utility option.
The algorithm updates the primal variables $x_{no}^{ml}$ after each acceptance and rejection. The total resource demands are updated in line \ref{alg:demand update} if user $n$ is accepted into the system. Similarly, the marginal resource prices are updated accordingly in line \ref{alg:price update}.

\begin{algorithm}[]
\small
    \caption{\textsc{OnlineParkNCharge}}
    \label{algorithm}
    \begin{algorithmic}
    \STATE \textbf{Input:} $\mathcal{L}, \mathcal{M}_l,C_l, E_l, G_l, S_l, \pi_l, L_{c,e,g}, U_{c,e,g}$
    \STATE \textbf{Output:} $x, p$
    \end{algorithmic}
    \begin{algorithmic}[1]
    \STATE Define $f_g^l(y_g^l(t))$ according to \eqref{eq:generation cost}.
    \STATE Define the pricing functions $p(y(t))$ according to \eqref{eqn:zero inf price} and \eqref{eqn:newpricing} for cables, energy, and generation.
    \STATE Initialize $x_{no}^{ml}=0$, $y^{ml}(t)=0$, $u_n=0$.
    \STATE Initialize prices $p(0)$ according to \eqref{eqn:zero inf price} and \eqref{eqn:newpricing}.
    \STATE \textbf{Repeat for all $N$ users:}
    \STATE User $n$ submits $\theta_n$, generate feasible charging options.
    \STATE Update dual variable $u_n$ according to \eqref{eq: u_n}. \label{alg:utility}
    \IF{$u_n > 0$}
        \STATE $(o^{\star},m^{\star},l^{\star}) =\argmax_{\mathcal{L}, \mathcal{M}_{l},\mathcal{O}_n}\big\{v_{nl}$ \\
        \vspace{3pt}
        \hspace{20pt} $- \sum_{t\in[t_n^-,t_n^+]} \big( c_{no}^{ml}(t)p_{c}^{ml}(t)$\\
        \vspace{3pt}
        \hspace{20pt} $ +e_{no}^{ml}(t)(p_{e}^{ml}(t)+p_{g}^{l}(t)) \big)\big\}$
        \vspace{3pt}
        \STATE $\hat{p}_{no^{\star}}^{m^{\star}l^{\star}} = \sum_{t\in[t_n^-,t_n^+]} \Big( c_{no^{\star}}^{m^{\star}l^{\star}}(t)p_{c}^{m^{\star}l^{\star}}(t)$\\*
        \hspace{20pt} $+e_{no^{\star}}^{m^{\star}l^{\star}}(t)(p_{e}^{m^{\star}l^{\star}}(t) +p_{g}^{l^{\star}}(t)) \Big)$
        \vspace{3pt}
        \STATE $x_{no^{\star}}^{m^{\star}l^{\star}}=1$ and $x_{no}^{ml}=0$ for all $(o,l,m) \neq (o^{\star},l^{\star},m^{\star})$
        \STATE Update total demand $y(t)$ for cables, energy, and generation according to \eqref{eqn:offline cable demand}-\eqref{eqn:offline gen demand}. \label{alg:demand update}
        \STATE Update marginal prices $p(t)$ for cables, energy, and generation according to \eqref{eqn:zero inf price} and \eqref{eqn:newpricing}. \label{alg:price update}
    \ELSE
        \STATE $x_{no}^{ml}=0$, \hspace{3pt}  $\forall$ $\mathcal{L}$, $\mathcal{M}_l$ and $\mathcal{O}_n$.
    \ENDIF 
    \IF{$\exists o^{\star},m^{\star},l^{\star}$ and $x_{no^{\star}}^{m^{\star}l^{\star}}=1$}
        \STATE Accept user $n$ and allocate cables and energy in parking location $l^{\star}$ at EVSE $m^{\star}$.
        \STATE Charge user $n$ at $\hat{p}_{no^{\star}}^{m^{\star}l^{\star}}$.
    \ELSE
        \STATE Send user $n$ to auxiliary parking.
    \ENDIF
    \end{algorithmic}
\end{algorithm}

Next, we compare the total social welfare resulting from the online solution to the optimal offline solution. Specifically, an online mechanism is said to be $\alpha$-competitive when the ratio of social welfare from the optimal offline solution to the social welfare from the mechanism is bounded by $\alpha\geq1$. We extend a competitive ratio performance guarantee from \cite{IaaS} in Proposition \ref{piecewise linear price}. In the following, to ensure no user purchases too large of a fraction of the total available resource, we assume each user's resource demands are much smaller than the capacity limits.

\begin{proposition}
\label{piecewise linear price}
The marginal pricing function \eqref{eqn:newpricing} is $\alpha_1$-competitive in social welfare when selling limited resources with the piecewise linear operational cost in \eqref{eq:generation cost} where
\begin{align}
\nonumber
    \alpha_1= 2\max_{\mathcal{L},\mathcal{T}}\Big\{\ln{\Big(\frac{2\sum_{\mathcal{L}}M_l(C_l+E_l+\frac{1}{M_l})(U_g-\pi_l(t))}{L_g-\pi_l(t)}\Big)}\Big\}
\end{align}
with the assumption $\sum_{\mathcal{L}}M_l(C_l+E_l+\frac{1}{M_l})\geq\lceil\frac{eL_g}{2\max_{\mathcal{L},\mathcal{T}}\pi_l(t)}\rceil$.
\end{proposition}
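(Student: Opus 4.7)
The plan is to adapt the online primal-dual framework of \cite{IaaS} to the piecewise linear cost setting. The core identity to establish at each arrival is a differential relationship between the primal and dual objective increments. Specifically, upon user $n$'s admission, let $\Delta P_n$ denote the change in the primal objective (user value minus incremental operational cost) and $\Delta D_n$ denote the change in the dual objective (which consists of $u_n$ plus the changes in the Fenchel conjugate terms for the resources consumed). The goal is to show $\Delta D_n \leq \alpha_1 \, \Delta P_n$ for every arrival, and then sum over $\mathcal{N}$ to conclude that the online dual value (which upper bounds the offline primal optimum by weak duality) is within a factor $\alpha_1$ of the online primal value.

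The first step is verifying primal and dual feasibility throughout the run. Feasibility of the primal is immediate from the pricing construction: the exponential form of \eqref{eqn:zero inf price} and the upper piece of \eqref{eqn:newpricing} grow large enough at capacity that no user can have $u_n > 0$ once any cable, EVSE-energy, or grid resource is saturated, which enforces \eqref{eqn:offline cable lim}, \eqref{eqn:offline energy lim}, and the $+\infty$ branch of \eqref{eq:generation cost}. Dual feasibility is built into line 7 of \textsc{OnlineParkNCharge} since $u_n$ is explicitly set to satisfy \eqref{eqn:dual user utility}–\eqref{eqn:dual pj}.

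The central step is the per-arrival inequality $\Delta D_n \leq \alpha_1 \, \Delta P_n$. For each consumed unit of a resource $r$ (cable, EVSE energy, or grid energy) the dual increment involves the derivative of the Fenchel conjugate, which by \eqref{eq:fenchel cable cost}–\eqref{eq:fenchel gen cost} contributes a term proportional to $p_r(y_r(t))\,\Delta y_r(t)$, minus the fixed component $\pi_l(t)\Delta y$ on the grid when in the second regime. The pricing functions \eqref{eqn:zero inf price} and \eqref{eqn:newpricing} are precisely chosen to solve a separable ODE of the form $p_r'(y) = \tfrac{\ln\beta_r}{\mathrm{cap}_r}\bigl(p_r(y) + \kappa_r\bigr)$ where $\kappa_r = 0$ for cables/energy and $\kappa_r = -\pi_l(t)$ on the upper branch of the grid price, and where $\beta_r$ is the ratio $U_r/L_r$ (appropriately shifted). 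Integrating this ODE shows that the marginal dual cost $p_r \cdot \Delta y_r$ is at most $\ln(\beta_r \cdot 2R) \cdot \Delta P_n$ per unit consumed, because the admission criterion $u_n > 0$ in line 8 guarantees $v_{nl}$ exceeds the total payment $\hat{p}_{no}^{ml}$, tying $\Delta P_n$ to the prices paid. Summing this over all consumed resources for user $n$ and using the resource multiplicity $R = \sum_{\mathcal{L}} M_l(C_l + E_l + 1/M_l)$ produces $\alpha_1$ as the worst-case ratio.

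The main obstacle, and where the argument genuinely extends \cite{IaaS}, is handling the piecewise linear grid cost across its kink at $y_g^l(t) = s_l(t)$. I would treat this in two cases: (i) the allocation stays entirely in one branch, which is the direct ODE argument above; (ii) the allocation crosses the kink, in which case I split the dual increment into the contribution from the solar regime and the contribution from the paid regime and verify that continuity of $p_g^l$ at $y_g^l = s_l(t)$ (by construction: both branches evaluate to $L_g/2$ shifted appropriately, modulo the $\pi_l(t)$ addend, which I would double-check) makes the split sum telescope into a single bound dominated by the upper-branch logarithm. The assumption $\sum_{\mathcal{L}} M_l(C_l + E_l + 1/M_l) \geq \lceil eL_g/(2\max \pi_l(t))\rceil$ is exactly what guarantees $\ln(\cdot) \geq 1$ at the kink so that the upper-branch rate dominates and no $\max$ with a separate solar-branch ratio is needed in $\alpha_1$. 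Combined with the small-demand assumption (so each user's consumption is a differential, not a finite jump), this yields the stated competitive ratio.
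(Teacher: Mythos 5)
Your proposal follows essentially the same route as the paper: reduce the claim to the per-resource differential allocation--payment relationship of \cite{IaaS} (your per-arrival bound $\Delta D_n \le \alpha_1\,\Delta P_n$ is the integrated form of it), verify it separately on the two branches of the piecewise linear cost \eqref{eq:generation cost} using the derivatives of the price function, the cost, and its Fenchel conjugate, and then hand off to the primal-dual/weak-duality machinery of \cite{IaaS}. Your ODE characterization $p'(y)=\tfrac{\ln\beta}{\mathrm{cap}}(p(y)+\kappa)$ with $\kappa=-\pi_l(t)$ on the grid branch is exactly the computation the paper performs via \eqref{eqn:newpricing_deriv}.

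Two details need correcting. First, the point you said you would double-check fails: \eqref{eqn:newpricing} is \emph{not} continuous at the kink. The solar branch ends at $p_g^l(s_l(t))=\tfrac{L_g}{2R}\cdot\tfrac{2R\pi_l(t)}{L_g}=\pi_l(t)$, while the grid branch at $y_g^l(t)=s_l(t)$ equals $\pi_l(t)+\tfrac{L_g-\pi_l(t)}{2R}\bigl(\tfrac{2R(U_g-\pi_l(t))}{L_g-\pi_l(t)}\bigr)^{s_l(t)/(s_l(t)+G_l(t))}>\pi_l(t)$. Consequently your case (ii) telescoping argument does not close as stated: when an allocation crosses the kink, the dual side picks up a jump of $(s_l(t)+G_l(t))(p^{+}-\pi_l(t))$ in $f_g^{l*}$ with no matching primal increment at that instant, and this jump has to be absorbed by the slack in the branch-wise integral bounds --- which is precisely what is packaged inside Lemma 1 and Theorem 2 of \cite{IaaS} that the paper cites rather than reproves. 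Second, the assumption $R\ge\lceil eL_g/(2\max_{\mathcal{L},\mathcal{T}}\pi_l(t))\rceil$ serves only to guarantee that the solar-branch ratio $\ln(2R\pi_l(t)/L_g)$ is at least $1$, i.e., is a legitimate competitive ratio; the fact that the grid-branch logarithm dominates (so that only it appears in $\alpha_1$) follows from $U_g\ge L_g>\pi_l(t)$, independently of the bound on $R$.
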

\begin{proof}
For $\alpha$-competitiveness, the authors of \cite{IaaS} show that 
marginal pricing functions, operational cost functions, and Fenchel conjugates for the limited resources need to satisfy the \textit{Differential Allocation-Payment Relationship} given by:
\begin{align}
\label{eqn:diffalloc}
    \big(p_g^l(t) - f_g^{l'}(y_g^l(t))\big) \text{d}y_{g}^{l}(t) \geq \frac{1}{\alpha_g^l(t)} f_g^{l*'}(p_g^l(t)) \text{d}p_g^l(t)
\end{align}
for all $l\in\mathcal{L}, t=1,\dots,T$. For the energy-procurement operational cost in \eqref{eq:generation cost} and its Fenchel conjugate \eqref{eq:fenchel gen cost} respectively, the following derivatives are:
\begin{align}
    &\label{f_deriv}f_g^{l'}(y_g^l(t)) = 
    \begin{cases}
        0, & y_g^l(t)\in [0, s_l(t))\\
        \pi_l(t), & y_g^l(t)\in [s_l(t), s_l(t)+G_l(t)]
    \end{cases}\\
    &\nonumber\textrm{ and }\\
    &\label{f_deriv_conj}f_g^{l*'}(p_g^l(t)) = 
    \begin{cases}
        s_l(t), & p_g^l(t)\in[0,\pi_l(t))\\
        s_l(t)+G_l(t), & p_g^l(t) \geq \pi_l(t).
    \end{cases}
\end{align}
The derivative of the proposed pricing function \eqref{eqn:newpricing} is:
\begin{align}
\label{eqn:newpricing_deriv}
    &\text{d}p_g^l(y_g^l(t)) = \\*
    &\nonumber\begin{cases}
        \Big( \frac{L_g}{2Rs_l(t)} \Big) \Big( \frac{2R\pi_l(t)}{L_g} \Big)^{ \frac{y_g^l(t)}{s_l(t)}}\\
        \nonumber\times\ln{\Big(\frac{2R\pi_l(t)}{L_g}\Big)}\text{d}y_{g}^{l}(t), \hspace{69pt} y_g^l(t) < s_l(t), \\
        \Big( \frac{L_g-\pi_l(t)}{2R(s_l(t)+G_l(t))} \Big) \Big( \frac{2R(U_g-\pi_l(t))}{L_g-\pi_l(t)} \Big)^{ \frac{y_g^l(t)}{s_l(t)+G_l(t)}}\\
        \nonumber\times\ln{\Big(\frac{2R(U_g-\pi_l(t))}{L_g-\pi_l(t)}\Big)}\text{d}y_{g}^{l}(t), \hspace{48pt}y_g^l(t) \geq s_l(t), \\
        \end{cases}\\
    &\nonumber\textrm{where } R=\sum_{\mathcal{L}}M_l(C_l+E_l+\frac{1}{M_l}).
\end{align}

When $y_g^l(t) < s_l(t)$, $f_g^{l'}(y_g^l(t)) = 0$ and $f_g^{l*'}(p_g^l(t)) = s_l(t)$. As such, after inserting the derivative \eqref{eqn:newpricing_deriv} in \eqref{eqn:diffalloc}, we can show that the Differential Allocation-Payment Relationship holds when $\hat{\alpha}_g^{l}(t)\geq \ln{\Big(\frac{2R\pi_l(t)}{L_g}\Big)}$ as long as $R\geq\lceil\frac{eL_g}{2\max_{\mathcal{L},\mathcal{T}}\pi_l(t)}\rceil$. The constraint on $R$ ensures $\hat{\alpha}_g^{l}(t)\geq1$. 

Similarly, when $y_g^l(t) \geq s_l(t)$, $f_g^{l'}(y_g^l(t)) = \pi_l(t)$ and $f_g^{l*'}(p_g^l(t)) = s_l(t)+G_l(t)$. As such, after inserting the derivative \eqref{eqn:newpricing_deriv} in \eqref{eqn:diffalloc}, we can show that the Differential Allocation-Payment Relationship holds when $\hat{\hat{\alpha}}_g^{l}(t)\geq \ln{\Big(\frac{2R(U_g-\pi_l(t))}{L_g-\pi_l(t)}\Big)}$. Now, let $\alpha_g^{l}(t)=\max\{\hat{\alpha}_g^{l}(t), \hat{\hat{\alpha}}_g^{l}(t)\}$ and because \eqref{eqn:diffalloc} holds for the proposed pricing function, operational cost function, and Fenchel conjugate, the remainder of the proof follows from Lemma 1 and Theorem 2 in \cite{IaaS}.
\end{proof}

\begin{corollary}
If the final demand for energy procurement $y_g^l(t)$ for a given day is less than the available solar $s_l(t)$, the marginal pricing function \eqref{eqn:newpricing} is $\alpha_2$-competitive ($\alpha_2 < \alpha_1$) in social welfare when selling limited resources with the piecewise linear operational cost in \eqref{eq:generation cost} where
\begin{align}
\nonumber
    \alpha_2= 2\max_{\mathcal{L},\mathcal{T}}\Big\{\ln{\Big(\frac{2\sum_{\mathcal{L}}M_l(C_l+E_l+\frac{1}{M_l})(\pi_l(t))}{L_g}\Big)}\Big\}.
\end{align}
\end{corollary}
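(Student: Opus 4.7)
The plan is to specialize the argument of Proposition \ref{piecewise linear price} to the regime where the facility never draws grid power. First I would observe that under the hypothesis $y_g^l(t) < s_l(t)$ for every $t$, the pricing function \eqref{eqn:newpricing}, the marginal operational cost \eqref{f_deriv}, and the conjugate derivative \eqref{f_deriv_conj} each remain on their first branches throughout the horizon. In particular, $f_g^{l'}(y_g^l(t)) = 0$ and $f_g^{l*'}(p_g^l(t)) = s_l(t)$ at every $t$, and $\text{d}p_g^l$ takes the first form given in \eqref{eqn:newpricing_deriv}.

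Next, I would substitute these quantities into the Differential Allocation-Payment Relationship \eqref{eqn:diffalloc} and recover exactly the inequality for the first branch already derived inside the proof of Proposition \ref{piecewise linear price}, namely $\hat{\alpha}_g^{l}(t) \geq \ln\bigl(2R\pi_l(t)/L_g\bigr)$ with $R = \sum_{\mathcal{L}} M_l(C_l + E_l + 1/M_l)$. Because the second branch is never activated under the corollary's hypothesis, the companion bound $\hat{\hat{\alpha}}_g^{l}(t)$ is vacuous, and the per-resource competitive ratio reduces to $\alpha_g^{l}(t) = \hat{\alpha}_g^{l}(t)$. Taking the maximum over $l$ and $t$ and invoking Lemma~1 and Theorem~2 of \cite{IaaS} verbatim, which combine the per-resource ratios into a social-welfare competitive ratio with the overall factor of $2$, produces $\alpha_2$ exactly as stated. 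The standing assumption $R \geq \lceil eL_g/(2\max_{\mathcal{L},\mathcal{T}} \pi_l(t)) \rceil$ inherited from Proposition \ref{piecewise linear price} continues to guarantee $\hat{\alpha}_g^{l}(t) \geq 1$ so that the resulting ratio is well-defined.

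To verify the parenthetical claim $\alpha_2 < \alpha_1$, I would simply compare the log arguments. Since $L_g > \pi_l(t)$ is already required for $\alpha_1$'s formula to be meaningful and $U_g > \pi_l(t)$ by definition, a direct cross-multiplication yields $\pi_l(t)/L_g < (U_g - \pi_l(t))/(L_g - \pi_l(t))$ in any regime where valuations strictly exceed the procurement cost, and monotonicity of $\ln$ closes the comparison.

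The main obstacle is essentially bookkeeping rather than any new estimate: everything needed was already done pointwise on each branch inside the proof of Proposition \ref{piecewise linear price}. The only mildly delicate check is confirming that restricting the analysis to the solar branch does not invalidate the aggregation step of \cite{IaaS}. Since that aggregation is pointwise per $(l,t)$ and combines ratios via a maximum, dropping the (now inactive) grid-branch bound can only tighten the overall ratio, so the reduction goes through cleanly.
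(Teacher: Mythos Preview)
Your proposal is correct and matches the paper's approach: the corollary is stated without its own proof, as an immediate specialization of Proposition~\ref{piecewise linear price} to the solar-only branch, and your argument carries out precisely that specialization by retaining only $\hat{\alpha}_g^l(t)$ and discarding the now-vacuous grid-branch bound $\hat{\hat{\alpha}}_g^l(t)$. One small caution on the parenthetical $\alpha_2 < \alpha_1$: your cross-multiplication actually reduces to $L_g U_g > \pi_l(t)(2L_g - \pi_l(t))$, which needs slightly more than just $U_g > \pi_l(t)$; it holds under the natural regime $U_g \geq L_g > \pi_l(t)$ (or equivalently, that Proposition~\ref{piecewise linear price} already took $\alpha_g^l(t) = \hat{\hat{\alpha}}_g^l(t)$), so you may want to state that assumption explicitly rather than leave it implicit.
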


In the previous proposition, the pricing function \eqref{eqn:newpricing} relies on complete knowledge of the solar generation $s_l(t)$. If the system has inaccurate solar irradiation forecasts, the solar generation could be overestimated and resources are over-allocated resulting in infeasible solutions, which our online solution should avoid at all costs; or solar generation is underestimated and prices are set too high and the system rejects users that should otherwise be accepted. We analyze the case where we have a forecast of the solar generation each day in terms of a confidence interval. We do not assume a specific solar irradiance forecasting method; rather, we make use of a confidence interval for the potential solar each day as yearly solar irradiance recordings can provide minimum and maximum bounds for any given day. Additionally, out method assumes that these confidence regions are tightening as the day progresses. 
In this paper, we assume that the solar forecast for a future time $t$ increases in accuracy as the current time $t_{current}$ approaches $t$. Specifically, the solar forecast takes the following form:
\begin{align}
    s_l(t) \in [\underline{s}_l(t,t_{current}) , \overline{s}_l(t,t_{current})],
\end{align}
for $t=1,\dots,T$ and $1\leq t_{current}\leq t$. Here, $s_l(t)$ is the actual solar generation at time $t$ and the terms $\underline{s}_l(t,t_{current})$ and $\overline{s}_l(t,t_{current})$ are lower and upper bounds given by the forecast, respectively, at an earlier time $t_{current}$. We assume that the forecast is improving, specifically $\overline{s}_l(t,t_{current})$ is non-increasing and $\underline{s}_l(t,t_{current})$ is non-decreasing as $t_{current}$ approaches $t$. 

To account for the dependence of the solar forecast on the current time, the marginal pricing function \eqref{eqn:newpricing} is now written as $p_g^l(y_g^l(t),t_{current})$. To avoid possible infeasible allocations associated with overestimation of solar availability, we analyze the performance of pricing function \eqref{eqn:newpricing} that conservatively uses the underestimate of the solar generation, $\underline{s}_l(t,t_{current})$, in Proposition \ref{piecewise linear price estimate}. Fig. \ref{fig: solar_example} shows how the pricing function \eqref{eqn:newpricing} changes as the solar forecast improves. 

\begin{figure}[h]
    \centering
    \includegraphics[width=0.95\columnwidth]{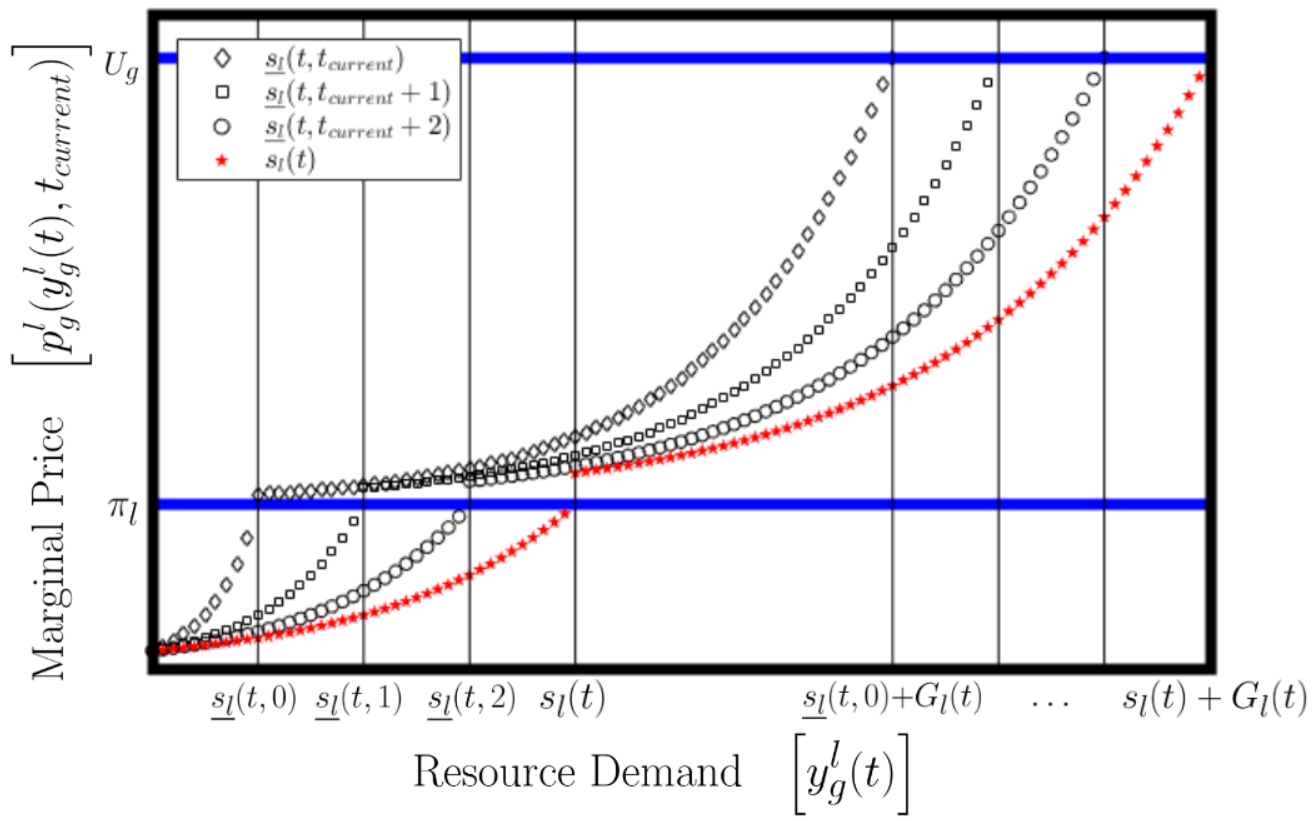}
    \caption{Pricing function with solar forecast $\underline{s}_l(t,t_{current})$.}
    \label{fig: solar_example}
\end{figure}

\begin{proposition}
\label{piecewise linear price estimate}
The marginal pricing function \eqref{eqn:newpricing} with an underestimate of solar generation, $\underline{s}_l(t,t_{current})$, is $\alpha_3= 2\max_{\mathcal{L},\mathcal{T}} \big\{\alpha_g^l(t)\big\}$ competitive in social welfare when selling limited resources with the operational cost in \eqref{eq:generation cost}
where
\begin{align}
\label{eqn:alpha_def}
    &\alpha_g^l(t)= \\
    &\nonumber\begin{cases}
        \max \bigg\{ \frac{\overline{s}_l(t,1)}{\underline{s}_l(t,1)}\ln(\frac{2R\pi_l(t)}{L_g}), \frac{\overline{s}_l(t,1)+G_l(t)}{\underline{s}_l(t,1)+G_l(t)}\ln(\frac{2R(U_g-\pi_l(t))}{L_g-\pi_l(t)}) \bigg\},\\ \hspace{183pt}\underline{s}_l(t,1) \neq 0,\\
        \frac{\overline{s}_l(t,1)+G_l(t)}{G_l(t)}\ln(\frac{2R(U_g-\pi_l(t))}{L_g-\pi_l(t)}), \hspace{60pt} \underline{s}_l(t,1) = 0,
    \end{cases}
\end{align}
with the assumption $R\geq\lceil\frac{eL_g}{2\max_{\mathcal{L},\mathcal{T}}\pi_l(t)}\rceil$.
\end{proposition}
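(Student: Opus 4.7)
The plan is to mimic the proof of Proposition \ref{piecewise linear price}, but carefully track the mismatch introduced because the pricing function evaluates its exponent at $\underline{s}_l(t,t_{current})$ while the actual operational cost \eqref{eq:generation cost} and its Fenchel conjugate \eqref{eq:fenchel gen cost} are determined by the true $s_l(t)$. Since the machinery of Lemma 1 and Theorem 2 of \cite{IaaS} only needs the Differential Allocation-Payment Relationship \eqref{eqn:diffalloc} to hold pointwise, I would attempt to show \eqref{eqn:diffalloc} under this mismatch with the stated $\alpha_g^l(t)$, then invoke the same reduction as in Proposition \ref{piecewise linear price} to obtain the $2\max_{\mathcal{L},\mathcal{T}}\alpha_g^l(t)$ competitive ratio.

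Concretely, I would split into the two regimes. In the regime $y_g^l(t) < s_l(t)$, the derivatives in \eqref{f_deriv}--\eqref{f_deriv_conj} give $f_g^{l'}=0$ and $f_g^{l*'}=s_l(t)$, while the pricing derivative \eqref{eqn:newpricing_deriv} evaluated with $\underline{s}_l(t,t_{current})$ in place of $s_l(t)$ yields
\begin{equation*}
\mathrm{d}p_g^l = \frac{L_g}{2R\,\underline{s}_l(t,t_{current})}\left(\frac{2R\pi_l(t)}{L_g}\right)^{y_g^l(t)/\underline{s}_l(t,t_{current})} \ln\!\left(\frac{2R\pi_l(t)}{L_g}\right)\mathrm{d}y_g^l.
\end{equation*}
Substituting into \eqref{eqn:diffalloc} and canceling the common exponential factor reduces the inequality to $\alpha \ge (s_l(t)/\underline{s}_l(t,t_{current}))\ln(2R\pi_l(t)/L_g)$. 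The analogous computation for $y_g^l(t)\ge s_l(t)$, now with $f_g^{l'}=\pi_l(t)$ and $f_g^{l*'}=s_l(t)+G_l(t)$, yields $\alpha \ge ((s_l(t)+G_l(t))/(\underline{s}_l(t,t_{current})+G_l(t)))\ln(2R(U_g-\pi_l(t))/(L_g-\pi_l(t)))$.

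To convert these into the clean bound stated in \eqref{eqn:alpha_def}, I would use the monotonicity assumptions on the forecast: because $\underline{s}_l(t,\cdot)$ is non-decreasing and $\overline{s}_l(t,\cdot)$ is non-increasing in $t_{current}$, the worst-case ratio $s_l(t)/\underline{s}_l(t,t_{current})$ is bounded by $\overline{s}_l(t,1)/\underline{s}_l(t,1)$, and likewise for the $+G_l(t)$ version. Taking the max across regimes and across $(\ell,t)$ gives $\alpha_3 = 2\max_{\mathcal{L},\mathcal{T}}\alpha_g^l(t)$. The condition $R\ge\lceil eL_g/(2\max_{\mathcal{L},\mathcal{T}}\pi_l(t))\rceil$ inherited from Proposition \ref{piecewise linear price} guarantees $\alpha_g^l(t)\ge 1$. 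The degenerate case $\underline{s}_l(t,1)=0$ is handled separately: here the first regime of the pricing function is trivial (the exponent's denominator vanishes, so we always operate in the second piece from the first positive allocation), so only the second bound contributes, and $\underline{s}_l(t,1)+G_l(t)=G_l(t)$ reproduces the second branch of \eqref{eqn:alpha_def}.

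The main obstacle will be the careful case analysis around the forecast mismatch: the pricing exponent uses a time-varying underestimate, while the cost function uses the true $s_l(t)$, so I must verify that the Differential Allocation-Payment Relationship holds \emph{for every instantaneous} $t_{current}$ (not just at the boundary), and then take a supremum. The monotonicity of $\underline{s}_l(t,\cdot)$ and $\overline{s}_l(t,\cdot)$ is exactly what lets the worst-case supremum collapse to the initial forecast interval $[\underline{s}_l(t,1),\overline{s}_l(t,1)]$, so the bulk of the work is verifying that this monotonicity indeed dominates every intermediate $t_{current}$; after that, the competitive ratio follows immediately from the same Lemma 1/Theorem 2 invocation as in Proposition \ref{piecewise linear price}.
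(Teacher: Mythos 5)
Your proposal follows exactly the route the paper intends (the paper omits this proof and only states that it ``follows the same structure as Proposition \ref{piecewise linear price} using the lower bound solar generation estimates \dots as well as the non-increasing and non-decreasing properties'' of the forecast bounds): substitute $\underline{s}_l(t,t_{current})$ into the pricing derivative, re-verify the Differential Allocation-Payment Relationship against the true-$s_l(t)$ cost and conjugate derivatives, and use forecast monotonicity to collapse the worst case to the $t_{current}=1$ interval, which yields precisely the two branches of \eqref{eqn:alpha_def}. One point to tighten: your two regimes are split at the true $s_l(t)$, but the pricing function's piece boundary sits at $\underline{s}_l(t,t_{current})\leq s_l(t)$, so there is an intermediate region $y_g^l(t)\in[\underline{s}_l(t,t_{current}),s_l(t))$ where the price is already in its second (above-$\pi_l(t)$) piece while $f_g^{l'}=0$; there the relationship still holds with the second-branch constant since $p_g^l(t)\geq p_g^l(t)-\pi_l(t)$ and $f_g^{l*'}\leq s_l(t)+G_l(t)\leq\overline{s}_l(t,1)+G_l(t)$, but this third case should be checked explicitly rather than absorbed into the two-way split.
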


\noindent \textit{Proof.} The proof is omitted for brevity. The proof follows the same structure as Proposition \ref{piecewise linear price} using the lower bound solar generation estimates instead of the actual solar generation amounts as well as the non-increasing and non-decreasing properties of the upper and lower estimates, respectively.

We would like to note the significance of Proposition 1 and Proposition 2 in the following. Namely, our competitive ratio results ensure the social welfare generated by the approximate online solution (that runs in real-time) cannot deviate too far from the social welfare generated by the oracle offline solution. The results in Proposition 1 and Proposition 2 ensure the online system, which acts without knowledge of future arrivals, performs within a constant factor of the offline/oracle system. Furthermore, the competitive ratios are worst case bounds on performance. That is, if this pricing scheme is used in a real scenario, even the social welfare generated with respect to an adversarially chosen arrival sequence is within the constant $\alpha$ of the optimal oracle solution.

\section{Experimental Evaluation}
\label{section:numerical}

\subsection{The Case of Variable Arrival Patterns}
\label{section:simulation CEC}

In this section, we present a comparison of our online pricing mechanism against an online certainty equivalent controller (CEC) for a downtown 
parking facility to show the performance our mechanism under different arrival statistics. CEC is an approximate dynamic programming (ADP) technique that replaces all future uncertain quantities with some typical values, more specifically, the expected values. In this case, we assume that the facility has arrival patterns following the distributions 
in Figure \ref{fig: distributioni}. 
\begin{figure}[h]
    \centering
    \includegraphics[width=0.95\columnwidth]{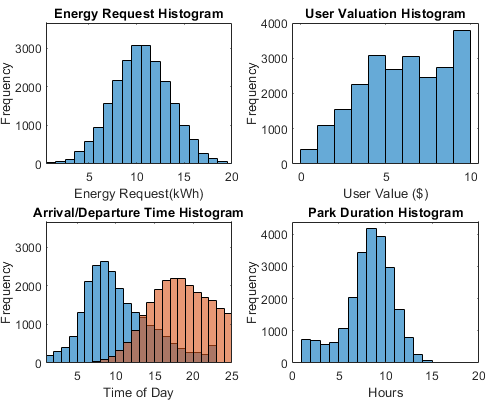}
    \caption{Top Left: Energy requests. Top Right: User valuations. Bottom Left: Arrival times (blue) and departure times (red). Bottom Right: Length of stay.}
    \label{fig: distributioni}
\end{figure}
In this example, there is 1 parking facility with 5 EVSEs and 4 cables per EVSE (i.e., there are 20 parking slots available in each time period). The facility can purchase energy from the Los Angeles grid at a cost of $\$0.127$/kWh. Lastly, the facility has a 32 kW rooftop solar generation system that follows a production curve from an LA location in January 2018 \cite{Solar_PVWATTS}. We assume standard crystalline silicon panels with 14$\%$ system loss due to shading, wiring, connections, mismatch, and degradation. We simulated the CEC and pricing mechanism for 2 different arrival count distributions (as shown in Fig. \ref{fig: CEC vs Pricing}). To demonstrate the value of adverserial solutions like ours in situations when the future is hard to predict, the distribution exhibits a larger variance under the second scenario. 
\begin{figure}[h]
    \centering
    \includegraphics[width=0.9\columnwidth]{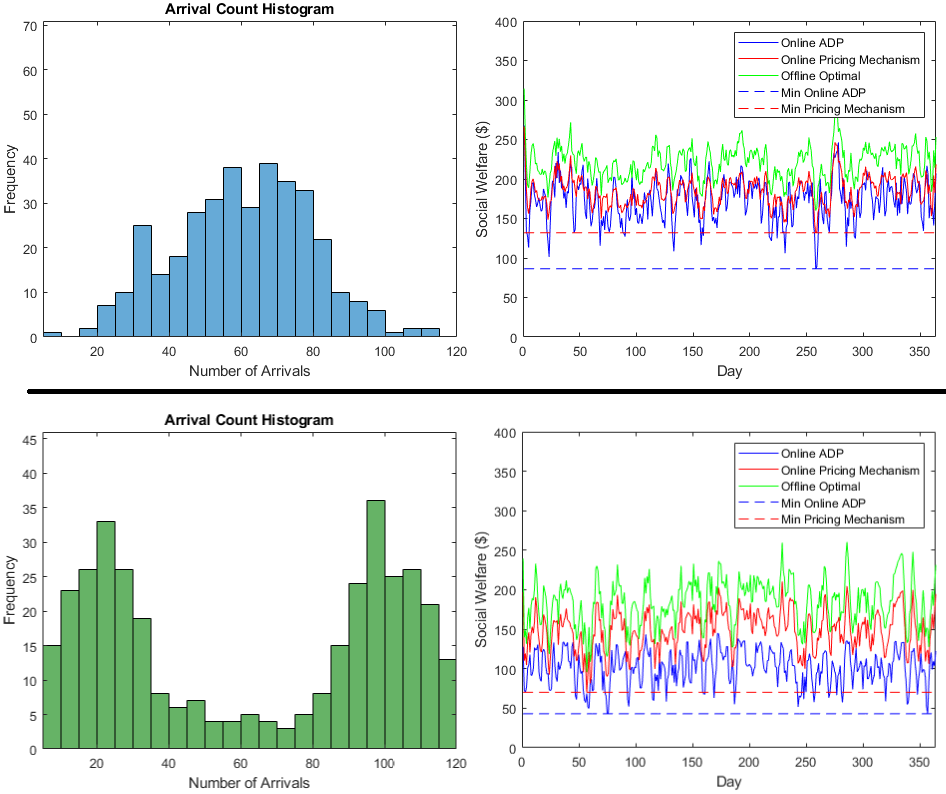}
    \caption{Top: Social welfare for facility with unimodal arrival count distribution. Bottom: Social welfare for facility with bimodal arrival count distribution.}
    \label{fig: CEC vs Pricing}
\end{figure}


From our results in Figure \ref{fig: CEC vs Pricing}, it is evident that in the case of higher variance arrival patterns, our pricing mechanism outperforms a controller that is dependant on expected statistics. If a parking facility does not have consistent arrival statistics each day, our pricing mechanism performs better because it accounts for worst-case arrival patterns. Additionally, in the unimodal and bimodal arrival count cases, the minimum daily social welfare of our mechanism is  larger than that of the CEC ADP. 

\subsection{Comparison with First-Come-First-Serve Strategy}
\label{section:fcfs}
In this section, we present a comparison of our online pricing mechanism against the first-come-first-serve (FCFS) strategy that is commonplace in many EVSE equipped charging facilities. Specifically, we highlight the performance of our mechanism over varying demand levels to show the effectiveness of our mechanism when the infrastructure becomes congested. In the FCFS strategy, an arriving EV selects the closest available parking spot and begins charging immediately (without any controller directing them). In this test case, we assume the arrivals' energy requests, valuations, and durations follow the same statistics as in Fig. \ref{fig: distributioni}; however, in this simulation, we directly control the number of arrivals each day (to highlight different demand levels) and we limit the arrival times to 8:00am-10:00am and limit the departure times to after 10:00am (thus showcasing the performance of FCFS and our mechanism when large quantities of vehicles arrive in a short time period each morning). We assume the parking infrastructure has 15 EVSEs with 4 cables each, yielding 60 parking spots total. 
\begin{figure}[]
    \centering
    \includegraphics[width=0.9\columnwidth]{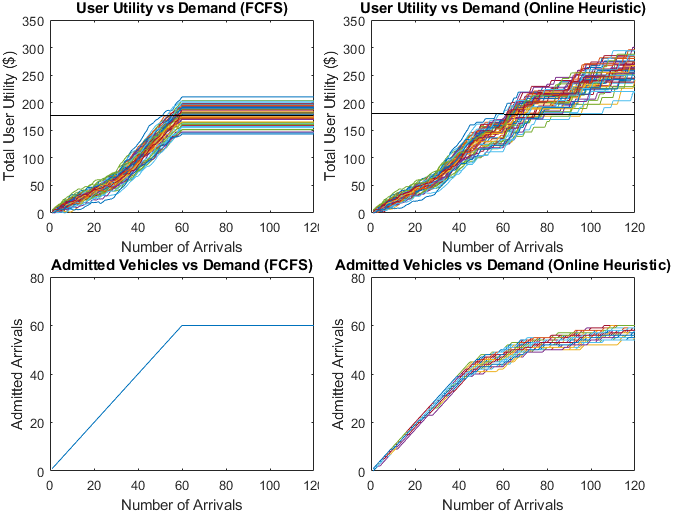}
    \caption{Top Left: FCFS's total user utility vs. number of daily arrivals (50 scenarios simulated). Top Right: Our online heuristic's total user utility vs. number of daily arrivals (50 scenarios simulated). Bottom Left: FCFS's number of admitted EVs vs. demand level (identical result for all 50 scenarios). Bottom Right: Our online heuristic's number of admitted EVs vs. demand level (50 scenarios simulated).}
    \label{fig:fcfs_demand}
\end{figure}

We compare the total user utility yielded from the FCFS strategy to our online heuristic with demand increasing from 1 to 120 arrivals each day. As shown in Fig. \ref{fig:fcfs_demand}, the total user utility increases steadily as the number of EVs entering the system increases. However, when the demand for the EVSEs is high, our online mechanism is able to filter out low value arrivals to admit higher value arrivals instead, and yield higher total utility. It is worth noting that the FCFS strategy yields similar total utility if the demand for charging is low; this is because FCFS admits all arrivals as long as there are open parking spots. 

Additionally, in Fig. \ref{fig:fcfs_solar}, we show the fraction of behind-the-meter solar used by our mechanism and FCFS for a day when there are 100 arriving EVs. From the plot, it is evident that our online mechanism is able to utilize significantly more solar energy, thus eliminating the need to send large amounts of excess energy back to the distribution grid. Specifically, our mechanism is able to schedule charging to time slots when there is available solar, while FCFS is not able to schedule charging times. The results in this section show that our mechanism outperforms the commonplace FCFS strategy in congested facilities in addition to better utilizing behind-the-meter solar.
\begin{figure}[]
    \centering
    \includegraphics[width=0.9\columnwidth]{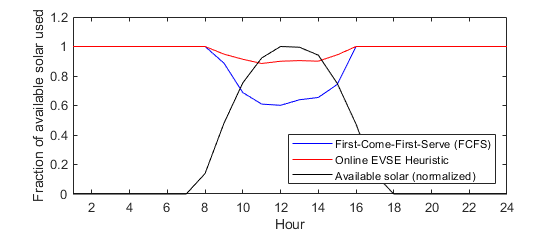}
    \caption{Comparison of behind-the-meter solar energy usage for a day with 100 arrivals.}
    \label{fig:fcfs_solar}
\end{figure}
\subsection{Multi-Facility Test Case}
\label{section:Benefits}
In this section, we present a multi-facility example located in downtown Los Angeles. Specifically, we look at 6 parking facilities with varying rooftop generation amounts. We assume standard crystalline silicon panels with 14$\%$ system loss. Facility 6 has a 75kW solar generation system, facility 5 has 60kW, facility 4 has 45kW, facility 3 has 30kW, facility 2 has 15kW, and facility 1 does not have any solar generation. We examined a 20 day period with 600 arrivals each day. Each arrival has valuation in $[\$1,\$10]$, energy request in [1, 20] kWh, and stay length in [1, 8] hours. Each of the 6 facilities has 8 SOMC EVSEs each equipped with 4 cables. Furthermore, each facility purchases electricity from the Los Angeles grid at $\$$0.127/kWh. We examine the performance of the system with transformer capacity limits of 75kVA.

Figure \ref{fig: 6_loc_100_50} shows the total user utility, social welfare, and electricity cost for 20 days. An observation worth noting is the total user utility from our updated pricing function is always larger than that of a solar agnostic pricing framework. This is due to setting lower prices on the electricity generation resource when there is solar available as seen in Figure \ref{fig: pricing_example} (our previous work was agnostic to the free solar generation). Moreover, over the 20 day period, our updated pricing mechanism admits 387 arrivals on average while our previous work only admits 369 arrivals on average. As such, our updated mechanism is favorable for users of the system as prices are lower and more users are admitted. Additionally, 
our improved mechanism is able to utilize more solar, reducing reliance on the local grid.


\begin{figure}[h]
    \centering
    \includegraphics[width=0.85\columnwidth]{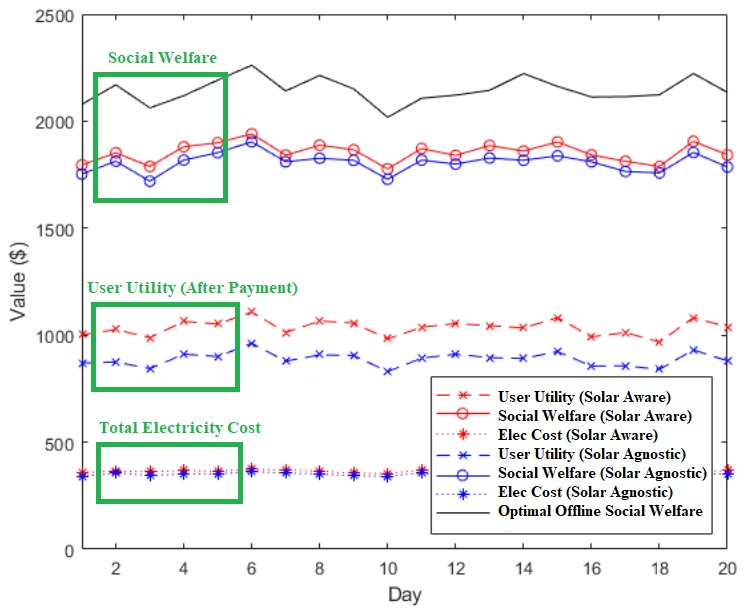}
    \caption{Multi-facility example. Red: solar aware pricing. Blue: solar agnostic pricing.}
    \label{fig: 6_loc_100_50}
\end{figure}



\subsection{Importance of Accurate Departure Times}
\label{section: departure reports}

In this section, we discuss the effect of inaccurate departure time reporting. For example, consider the case when a user reports that she will exit the system by 4:00pm; however, she gets delayed and cannot remove her EV until 5:00pm or later. This affects the reservation system because there might be a reservation for the EVSE at that timeslot. To avoid these reservation collisions due to delayed departures, we analyze the performance of the system with extra hours added to each arrival's stay length as a ``buffer" to prevent double-allocations. We examined the same test case as Section \ref{section:Benefits} with buffer sizes of 1 and 2 hours added to each arrival's stay length. Adding 1 and 2 hour departure buffers yielded average social welfare losses of $16\%$ and $29\%$, respectively.



\subsection{Infrastructure Recommendation}
\label{section:Investment}
In this section, we demonstrate the importance of infrastructure planning in order to maximize the smart charging capabilities of a parking facility. A facility with too few EVSEs will limit the users' utilities as well as the smart charging potential. Conversely, installing too many EVSEs results in idle chargers. As such, we perform a cost-benefit analysis to determine the number of cables at each SOMC EVSE as well as the number of EVSEs that should be installed at a facility in order to maximize social welfare over an extended period. Specifically, we are simulating the same system as described in Sections \ref{section:Benefits} and \ref{section: departure reports}; however, we have increased the duration to 2 years (730 days). Furthermore, we are including initial and recurring costs relating to a parking structure equipped with SOMC EVSEs. These costs include EVSE unit costs, installation costs, electricity consumption, maintenance, and networking costs. In the following, we are looking to choose the constraint variables $C_l$ and $M_l$ (number of cables per EVSE and number of EVSEs, respectively) that maximize users' utilities minus the aforementioned investment and operational costs. We use $I_{C}$ to denote the EVSE unit investment per cable and $I_{M}$ to denote the installation cost per SOMC EVSE. Additionally, $I_{m,n}$ represents a recurring infrastructure maintenance and networking cost per EVSE. As such, the infrastructure investment cost can be written as:
\vspace{0.5pt}
\begin{align}
    \label{eqn:investment obj}
    &\sum_{\mathcal{L}}\big( I_{C}C_l +I_{M} \big)M_l + T\sum_{\mathcal{L}} I_{m,n} M_l.
\end{align}

\noindent In \eqref{eqn:investment obj}, the first term is the initial investment cost for the SOMC EVSE hardware and installation and the second term represents the recurring maintenance and networking cost.
\begin{figure}[]
    \centering
    \includegraphics[width=0.9\columnwidth]{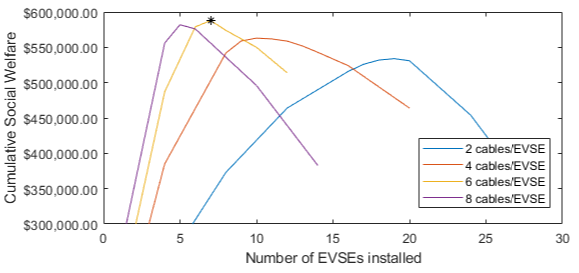}
    \caption{Cost-benefit analysis over 2 years.}
    \label{fig: investment}
\end{figure}
In the following, we assumed each EVSE was a pedestal mounted unit with an installation cost of $\$3,308$ \cite{Investment}. Additionally, each extra cable for each SOMC increased the cost of the EVSE by $\$3,343$ \cite{Investment}. Last, the recurring networking and maintenance fee was assumed to be $\$75$ per month \cite{Investment}. Figure \ref{fig: investment} shows the comparison of total social welfare generated across the entire time period for various levels of infrastructure investments. For this test case, the best result occurred when each location had 7 SOMC EVSEs each equipped with 6 cables. 
With this level of infrastructure, the system did not yield positive social welfare until the second year of operation. As seen in Fig. \ref{fig: investment}, it is clear that sizing a facility for the given use case is critical. Smart charging strategies require a sufficient number of EVSEs to yield maximal benefits; however, welfare decreases if extra EVSEs are purchased and underutilized.

\section{Conclusion}
In this paper, we presented an online pricing mechanism as a solution to the EVSE reservation problem. The online mechanism functions as both an admission controller and a distributor of the facilities' limited charging resources. The work presented in this manuscript complements existing literature in the area and the important characteristics are as follows. First, the mechanism readily accommodates multiple locations, multiple limited resources, operational costs, and variable arrival patterns. The mechanism does not rely on fractional allocations or rounding methods to produce integer allocations in a computationally feasible manner and it never revokes previously made reservations. Moreover, our online mechanism readily handles the inherent stochasticity of the EVSE reservation problem including unknown sojourn times, unknown energy requests, and unknown user valuation functions. The online mechanism can handle adversarially chosen arrival sequences and still generate social welfare within a factor of $\frac{1}{\alpha}$ of the offline optimal. We discussed a competitive ratio as a performance guarantee for the online mechanism compared to the oracle offline solution and provided numerical results showing the efficacy of the mechanism.


\bibliographystyle{IEEEtran}
\bibliography{references}

\end{document}